\newtheorem{theorem}{Theorem}
\newtheorem{remark}{Remark}
\newtheorem{proposition}[theorem]{Proposition}
\newcommand{\Bern}{\mathsf{Bern}}
\newfont{\bbb}{msbm10 scaled 500}
\newfont{\bb}{msbm10 scaled 1100}
\newcommand{\EE}{\mbox{\bb E}}
\newcommand{\Nc}{{\cal N}}
\newcommand{\Pc}{{\cal P}}
\newcommand{\Rc}{{\cal R}}
\newcommand{\Tc}{{\cal T}}
\newcommand{\Xc}{{\cal X}}
\newcommand{\Yc}{{\cal Y}}
\newcommand{\eqdef}{\stackrel{\Delta}{=}}
\newcommand{\snr}{\mathsf{snr}}
\newcommand{\mkv}{-\!\!\!\!\minuso\!\!\!\!-}
\begin{document}

\title{On the Relevance-Complexity Region of Scalable Information Bottleneck}

\author{\IEEEauthorblockN{Mohammad Mahdi Mahvari}
\IEEEauthorblockA{
\thanks{}
Sharif University of Technology \\
Tehran, Iran\\
 {\tt mahdimahvar@gmail.com}
}
\and
\IEEEauthorblockN{Mari Kobayashi}
\IEEEauthorblockA{
\thanks{The work of M. Kobayashi was supported by DFG.}
Technical University of Munich \\
Munich, Germany\\
 {\tt mari.kobayashi@tum.de}
}
\and
\IEEEauthorblockN{Abdellatif Zaidi}
\IEEEauthorblockA{
\thanks{}
Universite Paris-Est \\
Champs-sur-Marne, 77454, France\\
 {\tt abdellatif.zaidi@u-pem.fr}
}
}

\maketitle
\begin{abstract}
The Information Bottleneck method is a learning technique that seeks a right balance between accuracy and generalization capability through a suitable tradeoff between compression complexity, measured by minimum description length, and distortion evaluated under logarithmic loss measure. 
In this paper, we study a variation of the problem, called scalable information bottleneck, where the encoder outputs multiple descriptions of the observation with increasingly richer features. The problem at hand is motivated by some application scenarios that require varying levels of accuracy depending on the allowed level of generalization. 
First, we establish explicit (analytic) characterizations of the relevance-complexity region for memoryless Gaussian sources and memoryless binary sources. Then, we derive a Blahut-Arimoto type algorithm that allows us to compute (an approximation of) the region for general discrete sources. 
Finally, an application example in the pattern classification problem is provided along with numerical results. 
\end{abstract}

\section{Introduction}
In statistical (supervised) learning models, one seeks to strike a right balance between accuracy and generalization capability. Many existing machine learning algorithms generally fail to do so; or else only at the expense of large amounts of algorithmic components and hyperparameters that are heavily tuned heuristically for a given task (and, even so, generally fall short of generalizing to other tasks). 
The Information Bottleneck (IB) method~\cite{tishby1999information}, which is essentially a remote source coding problem under logarithmic loss fidelity measure, 
attempts to do so by finding a suitable tradeoff between algorithm's robustness, measured by compression complexity, and 
accuracy or relevance, measured by the allowed average logarithmic loss (see e.g. ~\cite{zaidi2020information, goldfeld2020information}). More specifically, for a target (label) variable $X$ and an observed variable $Y$, the IB finds a description $U$ that is maximally informative about $X$ while being minimally informative about $Y$, where informativeness is measured via Shannon's mutual information.

We study a variation of the IB problem, called {\it scalable information bottleneck}, where the encoder outputs $T\geq 2$ descriptions with increasingly richer features. 
This is motivated primarily by application scenarios in which a varying level of accuracy is required depending on the allowed and/or required level of complexity. The model is illustrated in Fig.~\ref{fig:ScalableRSC} for the case of $T=2$. As an example, one may think about the simple scenario of making inference about a moving object on the road. A decision maker which receives only coarse information about the object would have to merely identify its type (e.g., car, bicycle, bus), while one that receives also refinement information would have to infer more accurate description, such as color, speed, and so on.


While a single-letter characterization of the optimal relevance-complexity region of this model can be found easily by specializing the result of \cite[Theorem 1]{tian2007multistage} to the case of logarithmic loss measure, we here establish explicit (analytic) characterizations of the relevance-complexity region for two examples, i.e. memoryless Gaussian and memoryless binary sources. Then, we derive a Blahut-Arimoto type algorithm that allows us to compute (an approximation of) the region for general discrete sources. Finally, we illustrate the 
usefulness of our results through an application example in the pattern classification problem. 


The relevance-complexity tradeoff has been studied in various setups, including the distributed inference \cite{estella2018distributed,aguerri2019distributed}, multivariate \cite{friedman2001multivariate} and multi-layer  \cite{yang2017multi} scenarios. 
Although multivariate IB \cite{friedman2001multivariate} considers a more general setup, the coding method as well as the achievable relevance-complexity region have not been investigated. Notice that our model 
is conceptually different from the multi-layer IB \cite{yang2017multi} that considers non-identical hidden variables (sources) and the input of a given layer as the output of the previous layer.
\begin{figure}[t]
\vspace{-1em}	
\begin{center}	
\includegraphics[width=0.4\textwidth]{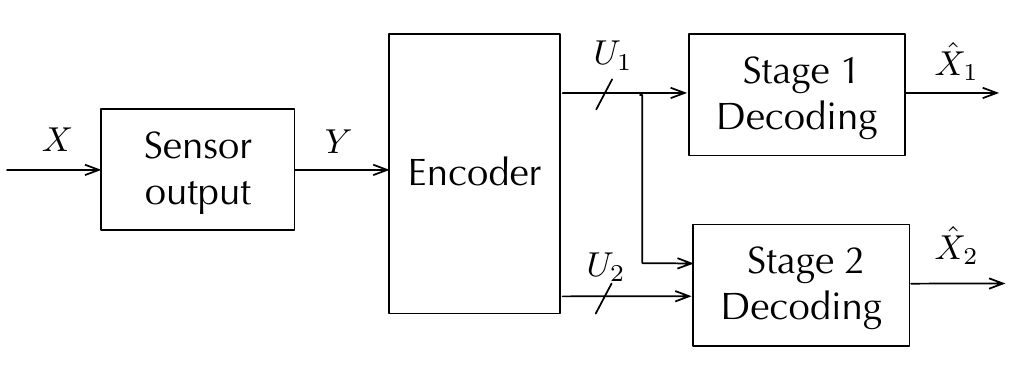}
\caption{Scalable information bottleneck for $T=2$ stages.}
\label{fig:ScalableRSC}
\end{center}
\vspace{-1.2em}	
\end{figure} 	

Throughout this paper, we use the following notation. Upper and lower case letters are used to denote random variables and their realizations, e.g., $X$ and $x$, respectively while calligraphic letters, e.g., $\mathcal{X}$, denote sets. 
$X^n$ denotes $(X_1, X_{2}, \dots, X_n)$ and $[K] =\{1, \dots, K\}$.
We use $\mathcal{P(X)}$ to denote the set discrete probability distributions on $\mathcal{X}$. $H(X)$ denotes the entropy of a discrete random variable $X\in \Xc$ while $h(X)$ denotes the differential entropy of a continuous random variable.  
$h_2(p)\eqdef - p\log_2 p -(1-p) \log_2(1-p)$ for $p\in [0,1/2]$ denotes the binary entropy function. 
\section{Problem Formulation}\label{Problem Formulation}

Let $(X^n, Y^n) \in \Xc^n \times \Yc^n$ be a sequence of $n$ i.i.d. discrete random variables corresponding to the source and observation. 
An $\left(n, R_1, \dots, R_T \right)$ successive refinement 
code for the $T$-scalable information bottleneck consists of
\begin{itemize}
\item Encoding function $\phi_t^{(n)}: \Yc^n \mapsto \Rc_t$ for $t=1, \dots, T$, where $\Rc_t =\left \{1, \dots, 2^{n R_t} \right \}$.
\item Decoding function $\psi_t^{(n)}: \Rc_1\times \dots \times \Rc_t \mapsto \hat{\Xc}_t^{n}$, for $t=1, \dots, T$. 
\end{itemize}
The decoder wishes to reconstruct the source in a scalable fashion, i.e. for $t=1, \dots, T$, 
\begin{align}\label{eq:Distortion}
\EE \left[d^n(X^n, \hat{X}^n_t) \right] =\frac{1}{n}\EE \left[ \sum_{i=1}^n d(X_i, \hat{X}_{t, i})\right]\leq D_t,
\end{align}
where $d(x, \hat{x})$ is a distortion function. In particular, we focus on the logarithmic distortion measure, given by $d_{\rm log}(x, \hat{x}) =  - \log(\hat{x}(x))$, where $\hat{x}(x) \in \Pc(\Xc)$ denotes the probability distribution (mass function) evaluated at $x\in \Xc$. 
We say that the rate-distortion tuples $(R_1, \dots, R_T, D_1, \dots, D_T)$ are achievable if there exists an $(n, R_1, \dots, R_T)$ $T$-scalable successive refinement code satisfying \eqref{eq:Distortion} for an arbitrarily large $n$. 
By letting $\Delta_t$ denote the {\it relevance} of stage-$t$ defined as $\Delta_t \eqdef H(X)-D_t$, the relevance-complexity region is defined as the minimum complexity (rate) tuple $(R_1, \dots, R_T)$ such that the relevance constraints $(\Delta_1, \dots, \Delta_T)$ are satisfied. 

%
%

\section{Relevance-Complexity Region of $T$-Scalable Information Bottleneck}\label{sect:relevance-complexity Region}
This section characterizes the relevance-complexity region of the $T$-scalable information bottleneck problem described in the previous section.
\begin{theorem}\label{theorem:region1}
The relevance-complexity region of $T$-scalable information bottleneck satisfies
\begin{subequations}
\begin{align}
\sum_{l=1}^{t}R_l &\geq \sum_{l=1}^{t}I \left(Y;U_l|U_1, \dots, U^{l-1} \right),\;\;\; \forall t \label{eq:region1-rate} \\
\Delta_t  &  \leq I \left(X;U_1, \dots, U_t \right), \;\;\; \forall t  \label{eq:region1-relevance} 
\end{align}
\label{eq:region1}
\end{subequations}
where $X \mkv Y \mkv (U_1, \dots, U_T)$ forms a Markov chain. 
\end{theorem}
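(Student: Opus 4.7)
My plan is to obtain \eqref{eq:region1} as a direct specialization of the general $T$-stage successive refinement rate--distortion theorem of Tian and Diggavi~\cite{tian2007multistage} to the remote source coding setting under logarithmic loss. Once that template is invoked, the rate constraints are exactly \eqref{eq:region1-rate}; what remains is to (i) translate the log-loss distortion constraints $D_t$ into the relevance-form constraints \eqref{eq:region1-relevance} via the posterior-matching property of log loss, and (ii) check that the encoder, which only sees $Y^n$, automatically produces the Markov chain $X \to Y \to (U_1, \ldots, U_T)$.

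\textbf{Achievability.} I would fix any test channel factored in the successive refinement order $p(u_1, \ldots, u_T | y) = \prod_{l=1}^T p(u_l | y, u_1, \ldots, u_{l-1})$ and build nested codebooks in the Rimoldi/Wyner-Ziv style. A standard covering and joint-typicality argument gives that any sum-rate strictly above $\sum_{l=1}^t I(Y; U_l | U_1, \ldots, U_{l-1})$ lets the $t$-th decoder recover $(U_1^n, \ldots, U_t^n)$ jointly typical with $Y^n$. The log-loss reconstruction is then set to the single-letter posterior $\hat{X}_{t,i}(\cdot) = P_{X | U_1, \ldots, U_t}(\cdot | U_{1,i}, \ldots, U_{t,i})$, for which the expected per-symbol log loss converges to $H(X | U_1, \ldots, U_t)$, yielding relevance $I(X; U_1, \ldots, U_t)$.

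\textbf{Converse.} For any code, let $J_l = \phi_l^{(n)}(Y^n)$ and identify the auxiliaries $U_{l,i} := (J_l, Y^{i-1})$; since each is a deterministic function of $Y^n$ and the source is memoryless, the desired single-letter Markov chain $X_i \to Y_i \to (U_{1,i}, \ldots, U_{T,i})$ holds automatically. The sum-rate bound follows from standard chain-rule manipulations,
\begin{align*}
n \sum_{l=1}^t R_l &\geq H(J_1, \ldots, J_t) \geq I(Y^n; J_1, \ldots, J_t) \\
&= \sum_{i=1}^n I(Y_i; J_1, \ldots, J_t, Y^{i-1}) \\
&= \sum_{i=1}^n \sum_{l=1}^t I(Y_i; U_{l,i} | U_{1,i}, \ldots, U_{l-1,i}),
\end{align*}
where the second equality uses that $Y_i$ is independent of $Y^{i-1}$. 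A time-sharing variable $Q \sim \mathrm{Unif}[n]$, absorbed into each auxiliary, then produces the single-letter form, preserves the Markov chain, and satisfies $I(X; Q) = 0$ since the source is i.i.d.

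\textbf{Main obstacle.} The only non-routine ingredient is the log-loss lemma used on the relevance side: for any map $\hat{x}: \Rc_1 \times \cdots \times \Rc_t \to \Pc(\Xc)$ and any random variable $J$,
\begin{align*}
\EE[-\log \hat{x}(J)(X_i)] \geq H(X_i | J),
\end{align*}
with equality at the posterior distribution. Applied with $J = (J_1, \ldots, J_t)$ it gives $n D_t \geq \sum_i H(X_i | U_{1,i}, \ldots, U_{t,i})$ and hence $n \Delta_t \leq \sum_i I(X_i; U_{1,i}, \ldots, U_{t,i})$. This posterior-matching inequality is what converts a distortion constraint into a mutual-information constraint and is the key step that drives the relevance form \eqref{eq:region1-relevance}; everything else is bookkeeping to ensure the auxiliaries at different layers are chosen consistently across $t$.
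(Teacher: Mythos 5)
Your proposal is correct and follows essentially the same route as the paper: achievability by specializing the Tian--Diggavi successive-refinement scheme to log loss with the posterior $P_{X|U^t}$ as soft reconstruction, and a converse that identifies $U_{l,i}=(J_l,Y^{i-1})$, applies the standard chain-rule manipulations for the sum-rate, invokes the Courtade--Weissman log-loss lemma ($\EE[-\log\hat{x}(J)(X)]\ge H(X|J)$) to turn the distortion constraint into $H(X^n|J^t)$, and single-letterizes with a uniform time-sharing variable absorbed into the auxiliaries. The steps you label as bookkeeping (chain rule, conditioning on $Y^{i-1}$, and the Markov chain $X_i - (Y^{i-1},J^t) - X^{i-1}$ to drop $X^{i-1}$) are exactly the steps the paper spells out.
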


\begin{remark}
\label{Changing Relevance Formulation}
Similarly to \cite[Remark 1]{tian2007multistage}, we can provide an alternative region to Theorem \ref{theorem:region1} under the stronger Markov chain $X \mkv Y \mkv U_T \mkv \dots \mkv U_1$.
\begin{subequations}\label{eq:region2}
\begin{align}
\sum_{l=1}^{t}R_l &\geq \sum_{l=1}^{t} I \left(Y;U_l|U_1, \dots, U_{l-1} \right), \;\;\; \forall t \label{eq:region2-rate} \\
\Delta_t  &  \leq I \left(X;U_t \right), \;\;\; \forall t.  \label{eq:region2-relevance} 
\end{align}
\end{subequations}
These two regions are equivalent as the RHS of \eqref{eq:region1-relevance} coincides with that of \eqref{eq:region2-relevance} under the stronger Markov chain. 
The former \eqref{eq:region1} will be used to characterize the relevance-complexity region in Section \ref{sect:Examples}, while the latter \eqref{eq:region2} will be used to design the algorithm in Section \ref{sect:Algorithm}.
\end{remark} 
The rest of the section is dedicated to the converse proof as the achievability builds on the hierarchical random codes and successive decoding as described in \cite{tian2007multistage}.

\begin{proof}
By recalling the decoder output of stage $t$ as $V_t=\phi_t^{(n)}(Y^n)$ and defining $U'_{t, i} = \left(V_t, Y^{i-1} \right)$ for all $t$ , we have
\begin{align}\label{eq:rate}
n \sum_{l=1}^t R_l & \geq \sum_{l=1}^t H(V_t)  \geq H(V^t) \geq I \left(V^t; Y^n \right)\nonumber\\
&\stackrel{(a)}
= \sum_{i=1}^n \sum_{l=1}^t I \left( V_l; Y_i |Y^{i-1}, V^{l-1} \right) \nonumber\\
&\stackrel{(b)}=  \sum_{i=1}^n \sum_{l=1}^t I \left( V_l, Y^{i-1}; Y_i |Y^{i-1}, V^{l-1} \right) \nonumber\\
&= \sum_{i=1}^n \sum_{l=1}^t I \left( U'_{l, i}; Y_i | U'_{1, i} , \dots, U'_{l-1, i} \right)\nonumber \\
& \stackrel{(c)}= n \sum_{l=1}^t I \left( U'_{l, Q}; Y_Q | U'_{1, Q} , \dots, U'_{l-1, Q}, Q \right)\nonumber \\
& \stackrel{(d)}= n \sum_{l=1}^t I \left( U_{l}; Y | U_{1} , \dots, U_t \right)
 \end{align}
 where (a) follows from the chain rule; (b) follows from $I \left( Y^{i-1}; Y_i |Y^{i-1}, V^{l} \right)=0$; (c) follows by definining a uniformly distributed random variable $Q$ over $[1:n]$ independent of all other variables; (d) follows by identifying $Y=Y_Q, U_{t} = (U'_t, Q)$.

Next, we derive the lower bound on the distortion. 
By applying \cite[Lemma 1]{courtade2013multiterminal} to each stage $t$,
we have
 \begin{align}\label{eq:distortion}
 nD_t & \geq H(X^n | V_1, \dots, V_t) \nonumber\\
 &\stackrel{(a)} = \sum_{i=1}^n H \left(X_i | X^{i-1}, V_1, \dots, V_t \right)\nonumber\\
 &\stackrel{(b)} \geq \sum_{i=1}^n H \left(X_i | X^{i-1}, Y^{i-1}, V_1, \dots, V_t \right) \nonumber\\
 &\stackrel{(c)} = \sum_{i=1}^n H \left(X_i |U'_{1,i}, \dots, U'_{t,i} \right) \nonumber\\
&\stackrel{(d)} = n H \left(X| U_1, \dots, U_t \right) 
 \end{align}
 where (a) follows from the chain rule; (b) follows because conditioning decreases the entropy; (c)
 follows from the Markov chain $X_i \mkv \left(U'_{1,i}, \dots, U'_{t,i} \right) \mkv X^{i-1}$; in (d) we used $X= X_Q$ and $U_t= (U'_t, Q)$.  By substituting $\Delta_t = H(X)-D_t$, we readily obtain the desired region.
\end{proof}

%

\section{Examples}\label{sect:Examples}
\subsection{Binary Example}\label{subsect:Binary Example}
We assume that the source $X$ and the observation $Y$ are both binary and Bernoulli distributed with $\frac{1}{2}$, denoted by 
$\Bern(1/2)$, and 
they are related through the binary symmetric channel $X=Y\oplus N$, where $N\sim \Bern(p)$ is independent of $X$ with $p\in [0, 1/2]$. 
\begin{proposition}\label{prop:binary_proposition}
The relevance-complexity region of the scalable IB for the binary memoryless source and observation is given by 
\begin{align}\label{the other binary region}
\Delta_t \leq 1-h_2 \left(h_2^{-1}\left(1-\sum_{l=1}^tR_l \right) \star p \right), \;\; \forall t
\end{align}
where $\{\Delta_t\}$ satisfies 
\begin{align}
\label{condition_the other binary region}
0 \leq \Delta_1 \leq \dots \leq \Delta_T \leq 1-h_2(p).
\end{align}
\end{proposition}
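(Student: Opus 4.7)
The plan is to specialize Theorem~\ref{theorem:region1} to the binary symmetric setting and to use Mrs.~Gerber's Lemma (MGL) as the main technical ingredient. Since $X = Y \oplus N$ with $Y \sim \Bern(1/2)$ and $N \sim \Bern(p)$ independent of any auxiliary variables $(U_1,\dots,U_T)$ forming the Markov chain $X \mkv Y \mkv U^T$, MGL directly yields
\begin{align}
H(X\cond U^t) \;\geq\; h_2\!\left(h_2^{-1}(H(Y\cond U^t)) \star p\right), \qquad \forall t.
\end{align}

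For the converse, I would combine this inequality with the aggregate rate bound of Theorem~\ref{theorem:region1}. Summing \eqref{eq:region1-rate} and applying the chain rule gives $\sum_{l=1}^t R_l \geq I(Y;U^t) = 1 - H(Y\cond U^t)$, hence $H(Y\cond U^t) \geq 1 - \sum_{l=1}^t R_l$. The inverse $h_2^{-1}$ is monotone on $[0,1]$ with range $[0,1/2]$, and the map $\alpha \mapsto h_2(\alpha \star p)$ is monotonically increasing on $[0,1/2]$ because $\alpha \star p = p + \alpha(1-2p)$ is increasing for $p \leq 1/2$ and stays in $[p,1/2]$. Propagating these monotonicities through MGL yields $H(X\cond U^t) \geq h_2\!\left(h_2^{-1}(1 - \sum_{l=1}^t R_l) \star p\right)$; substituting into $\Delta_t \leq I(X;U^t) = 1 - H(X\cond U^t)$ closes the converse.

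For achievability, I would construct auxiliary variables saturating MGL via a cascade of BSCs obeying the stronger Markov chain $X \mkv Y \mkv U_T \mkv \dots \mkv U_1$ of Remark~\ref{Changing Relevance Formulation}. Set $\alpha_t \eqdef h_2^{-1}(1 - \sum_{l=1}^t R_l)$, which is nonincreasing in $t$ and lies in $[0,1/2]$ under \eqref{condition_the other binary region}. Let $U_T = Y \oplus E_T$ with $E_T \sim \Bern(\alpha_T)$, and recursively $U_t = U_{t+1} \oplus F_t$ with independent $F_t \sim \Bern(\beta_t)$, where $\beta_t = (\alpha_t - \alpha_{t+1})/(1 - 2\alpha_{t+1}) \in [0,1/2]$ is chosen so that $\beta_t \star \alpha_{t+1} = \alpha_t$. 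Then $U_t = Y \oplus E_t$ with $E_t \sim \Bern(\alpha_t)$, giving $I(Y;U_t) = 1 - h_2(\alpha_t)$ and $I(X;U_t) = 1 - h_2(\alpha_t \star p)$; the stronger Markov chain collapses $I(X;U^t) = I(X;U_t)$ and $\sum_{l=1}^t I(Y;U_l\cond U^{l-1}) = I(Y;U_t)$, matching the claimed boundary with equality. The main obstacle is the careful monotonicity argument that converts MGL's lower bound into the explicit formula; once this is in place, checking that the cascaded-BSC parameters $\alpha_t,\beta_t$ remain in $[0,1/2]$ is routine under \eqref{condition_the other binary region}.
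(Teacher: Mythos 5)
Your proposal is correct and follows essentially the same route as the paper: Mrs.\ Gerber's Lemma combined with the summed rate bound $H(Y\cond U^t)\geq 1-\sum_{l=1}^t R_l$ for the converse, and a cascade of BSCs $U_t = U_{t+1}\oplus F_t$ saturating MGL for achievability. The only difference is organizational --- the paper first states an intermediate region parameterized by $\alpha_t$ (defined via $H(X\cond U^t)=h_2(p\star\alpha_t)$) and then shows equivalence with the explicit formula, whereas you substitute directly; both arguments rest on the same monotonicity of $h_2(\alpha\star p)$ in $\alpha$ and the same care that the noise $N$ be independent of $(Y,U^T)$ when invoking MGL.
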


\begin{proof}
Our proof consists of two steps, namely, we first prove the intermediate relevance-complexity region $\Rc_B$ in \eqref{binary_region_both}, and then prove the equivalence between $\Rc_B$ and the region in Proposition \ref{prop:binary_proposition}, denoted by $\Rc'_B$. 
\begin{subequations}\label{binary_region_both}
\begin{align}
&\sum_{l=1}^tR_l \geq 1-h_2(\alpha_t),\;\;\; \forall t\label{binary_complexity} \\
&\Delta_t \leq 1-h_2(\alpha_t \star p),\;\;\; \forall t\label{binary_relevance}
\end{align}
\end{subequations}
where $\{\alpha_t\}$ satisfies 
\begin{align}\label{eq:alpha}
0 \leq \alpha_T \leq \dots \leq \alpha_t \leq \dots \leq \alpha_1 \leq \frac{1}{2}.
\end{align} 
{\bf Step 1:~} We rewrite the relevance-complexity region \eqref{eq:region1} as
\begin{subequations}\label{eq:562}
\begin{align}
\sum_{l=1}^{t}R_l  &\geq  H(Y)-H(Y| U^t),\;\;\; \forall t \\ \label{eq:563}
\Delta_t  &  \leq H(X) - H(X|U^t), \;\;\; \forall t
\end{align}
\end{subequations}
First we look at the relevance term in \eqref{eq:563}. The Markov chain $X \mkv Y \mkv U^T$ implies 
\begin{align}
H(X|Y) \leq  H(X|U^t) \leq H(X). 
\end{align}
Due to the continuity of the entropy, there exists a unique value $\alpha_t\in [0,1/2]$ satisfying
\begin{align}\label{eq:571}
H(X|U^t) = h_2( p \star \alpha_t)
\end{align}
which is larger than $H(X|Y)=h_2(p)$ corresponding to $\alpha_t=0$ and
smaller than $H(X)=1$ corresponding to $\alpha_t=1/2$. Since $H \left(X|U^T \right) \leq \dots\leq H(X|U_1)$, we readily obtain \eqref{eq:alpha}.
Next, we wish to lower bound the complexity term by finding the upper bound of the term $H(Y| U^t)$. By writing $X= Y \oplus Z$ with $Z\sim\Bern(p)$ such 
that $Z$ is independent of $(Y, U^T)$ \footnote{Alternatively we can write $Y= X+Z$. However, this will not
satisfy the condition that $Z$ is independent of $(X, U_1, \dots, U_T)$.}, 
we can apply Mrs. Gerber's lemma \cite{el2011network}
\begin{align}\label{eq:580}
H(X |U^t) \geq h_2 \left( h_2^{-1} (H(Y|U^t) ) \star p \right).
\end{align}
By combining \eqref{eq:571} and \eqref{eq:580}, and further noticing that $h_2^{-1}(x)$ is increasing in $x\in [0,1]$, we obtain 
\begin{align}\label{eq:584}
H(Y| U^t) \leq h_2(\alpha_t) 
\end{align}
Plugging \eqref{eq:571} and \eqref{eq:584} into \eqref{eq:563} and \eqref{eq:562} respectively, we obtain \eqref{binary_complexity} and \eqref{binary_relevance}. 
This establishes the converse proof for the region \eqref{binary_region_both}

For the achievability, we let 
\begin{align} \label{eq:cascade}
U_t &=  Y \oplus Z_t  \;\;  Z_{t}\sim\Bern(\alpha_t), \forall t 
\end{align}
or alternatively write $U_t = U_{t+1} \oplus \tilde{Z}_{t} $ with $U_t = U_{t+1} \oplus \tilde{Z}_{t}$ for $t\leq T-1$. Plugging \eqref{eq:cascade} into the RHS of \eqref{eq:562}, we obtain the desired region in \eqref{binary_region_both}. 

{\bf Step 2:} Now we prove that $\mathcal{R}_B$ in \eqref{binary_region_both} is equivalent to the region $\mathcal{R}'_B$ of Proposition \ref{the other binary region}.
To this end, we verify both $\mathcal{R}_{B}\subseteq \mathcal{R'}_{B}$ and $\mathcal{R'}_{B}\subseteq \mathcal{R}_{B}$. In the former case, we show that every point $(\bm{\delta},\bm{r}) \in \mathcal{R}_{B}$, where $\bm{\delta}=[\delta_1, \dots, \delta_T]$ and $\bm{r}=[r_1, \dots, r_T]$, is also a point in the region $\mathcal{R'}_{B}$. For every pair of $(\delta_t,r_t)$, from \eqref{binary_complexity} and noticing that $h_2(p)$ is monotonically increasing in $p\in [0,1/2]$, we readily obtain 
\begin{align}
\label{eq:501}
\alpha_t &\geq h_2^{-1} \left(1-\sum_{l=1}^tr_l \right), 
\end{align}
Plugging \eqref{eq:501} into \eqref{binary_relevance}, we obtain the region $\mathcal{R'}_B$ as follows
\begin{align}
\delta_t \leq 1-h_2 \left(h_2^{-1} \left(1-\sum_{l=1}^tr_l \right) \star p \right),
\end{align}
for
\begin{align}\label{condition_the other binary region}
0 \leq \delta_1 \leq \dots \leq \delta_T \leq 1-h_2(p),
\end{align}
where \eqref{condition_the other binary region} is obtained by combining \eqref{binary_relevance} and \eqref{eq:alpha}
and recalling that the binary entropy function $h_2(\alpha_t \star p)$ is monotonically increasing in $\alpha_t$ for a fixed $p$.

Next, we prove that every point $(\bm{\delta},\bm{r}) \in \mathcal{R}'_{B}$ is also a point in the region $\mathcal{R}_{B}$. For all pairs of $(\delta_t,r_t)$, from \eqref{the other binary region} we have
\begin{align}
\delta_t \leq 1-h_2 \left(\alpha'_t \star p \right),
\end{align}
with 
\begin{align}\label{eq:915}
\alpha'_t = h_2^{-1} \left(1-\sum_{l=1}^t r_l \right) 
\end{align}
From \eqref{eq:915}, we readily obtain $0\leq \alpha'_T \leq \dots \leq \alpha'_1\leq 1/2$. 
Thus, each point $(\delta_t,r_t)\in \mathcal{R'}_{B}$ is also in the region of $\mathcal{R}_{B}$  and the proof is complete.
\end{proof}
\begin{figure*}[ht]
     \centering
     \subfloat[ ][Relevance-complexity tradeoff $(R,\Delta_1)$ with $\Delta_2=0.11$.]{
     \includegraphics[width=0.45\textwidth]{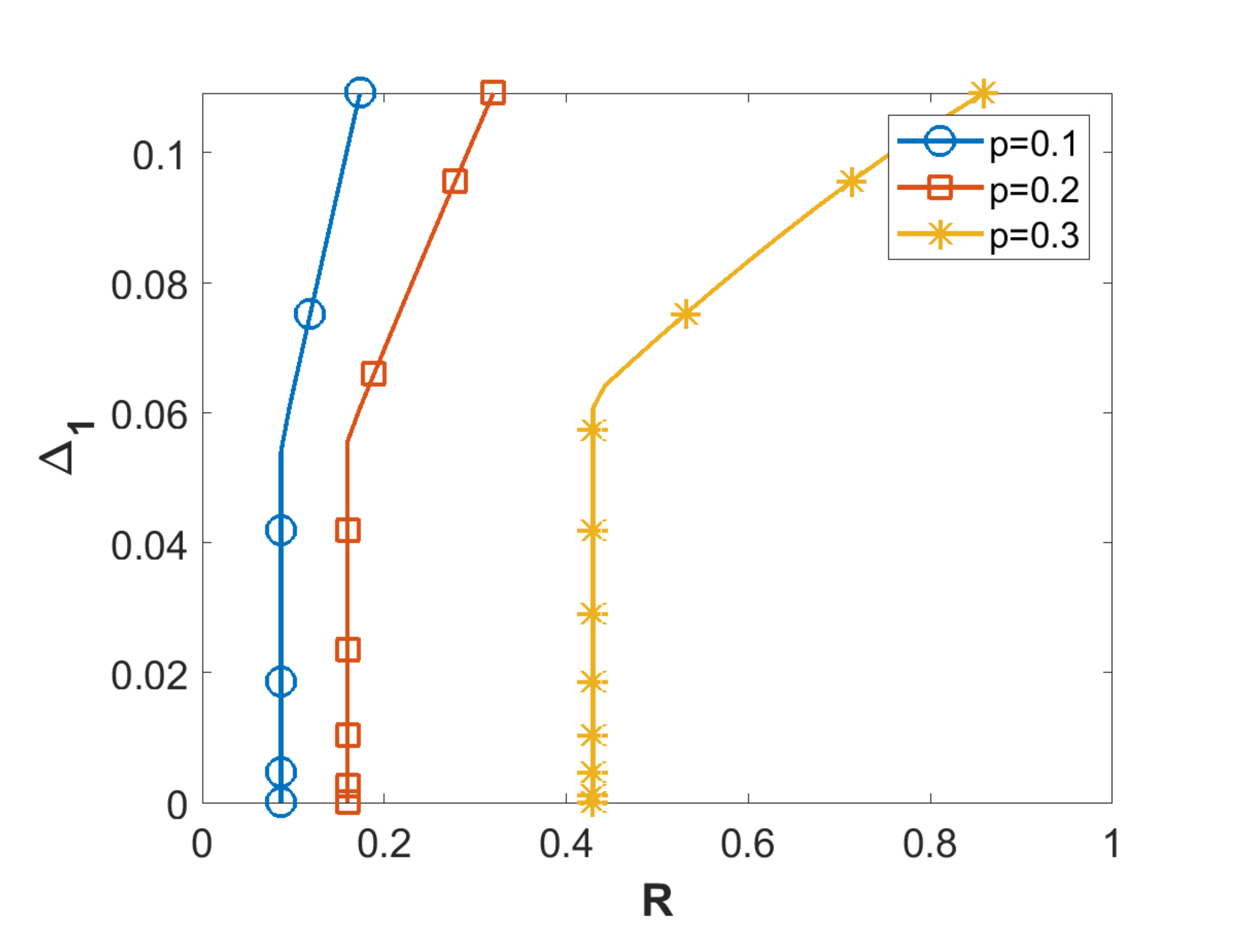}
     \label{fig:BR_11}
     }
     \subfloat[ ][Relevance-complexity tradeoff $(R,\Delta_2)$ with $\Delta_1=0$.]{
     \includegraphics[width=0.45\textwidth]{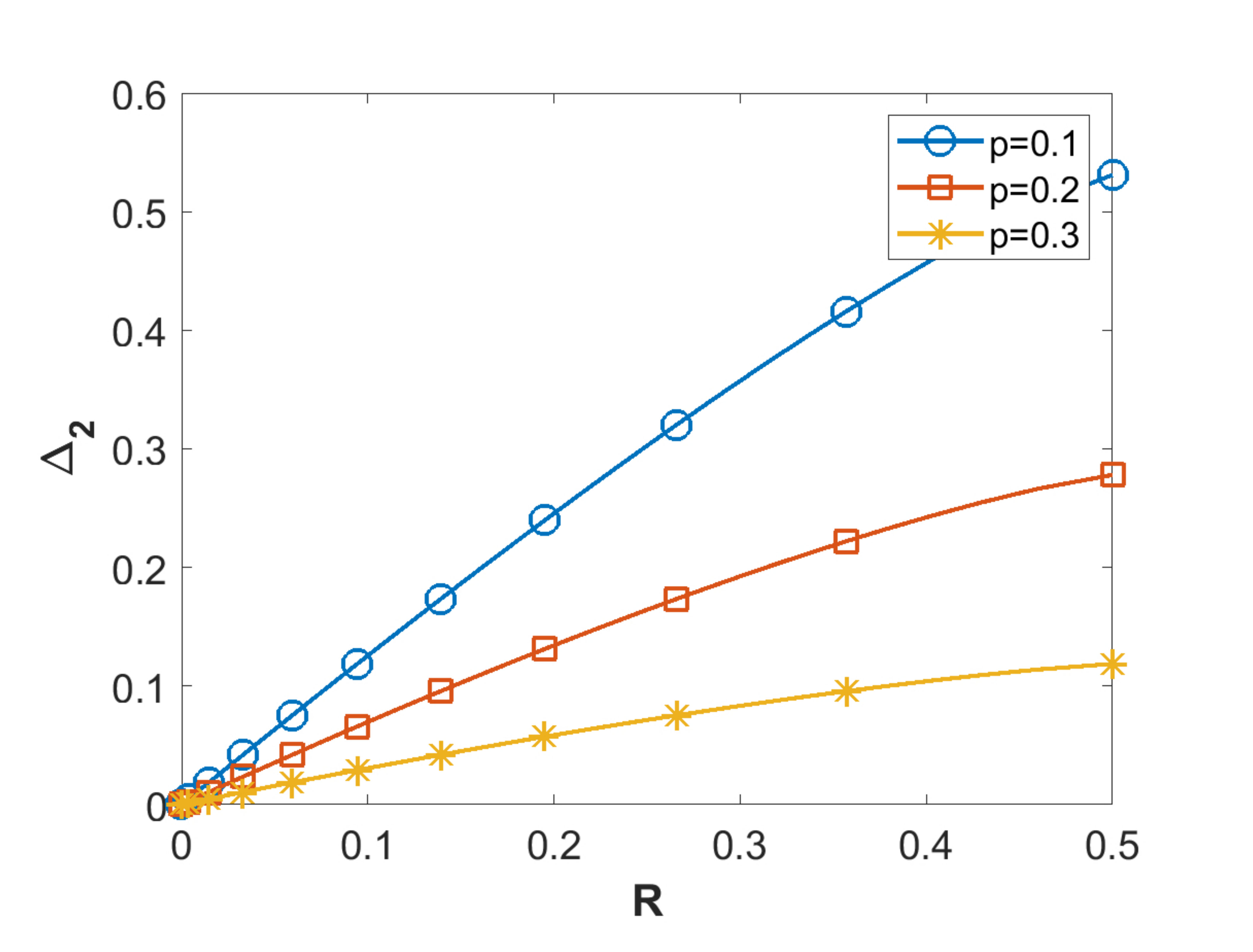}
     \label{fig:BR_12}
     }
     \caption{Comparing the regions $(R,\Delta_1)$ and $(R,\Delta_2)$ for different crossover probabilities $p$.}
     \label{fig:BR_1112}
\end{figure*}
\vspace{-1em}

\subsection{Scalar Gaussian Example}\label{subsect:Scalar Gaussian Example}
We consider the scalar Gaussian source and observation whose relation is given by 
\begin{align}
Y = X + W
\end{align}
where $X, W$ are independent real Gaussian random variables with zero mean and variance $\sigma_x^2, \sigma_w^2$ respectively. 
\begin{proposition}\label{prop:Gaussian}
The relevance-complexity region $\mathcal{R}_{SG}$ of the scalable IB for the scalar Gaussian source and observation is given by 
\begin{align}\label{the other SG region}
\Delta_t \leq \frac{1}{2}\log \left(\frac{\sigma^2_x+\sigma^2_w}{2^{\left(-2\sum_{l=1}^t R_l \right)}\sigma^2_x+\sigma^2_w} \right),\;\; \forall t 
\end{align}
where 
\begin{align}
\label{condition_the other SG region}
0 \leq \Delta_1 \leq \dots \leq \Delta_T \leq \frac{1}{2}\log \left( \frac{\sigma_x^2+\sigma_w^2}{\sigma_w^2} \right).
\end{align}
\end{proposition}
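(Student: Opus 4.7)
The plan is to mirror the two-step strategy used in the binary case, replacing Mrs.\ Gerber's lemma by the conditional entropy power inequality (EPI) since both source and observation are Gaussian. First I would establish an intermediate region parameterized by a single conditional differential entropy quantity, and then eliminate the auxiliary parameter to obtain the explicit bound \eqref{the other SG region}.

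For the converse, starting from Theorem \ref{theorem:region1}, I would introduce a parameter $\beta_t > 0$ via the identity $h(Y \mid U_1, \dots, U_t) = \tfrac{1}{2}\log(2\pi e\, \beta_t)$, which by conditioning inequalities and the Markov chain $X \mkv Y \mkv U^T$ satisfies $\sigma_w^2 \leq \beta_T \leq \cdots \leq \beta_1 \leq \sigma_x^2 + \sigma_w^2$. The rate constraint \eqref{eq:region1-rate} then reads $\sum_{l=1}^t R_l \geq \tfrac{1}{2}\log\frac{\sigma_x^2+\sigma_w^2}{\beta_t}$, giving $\beta_t \geq (\sigma_x^2+\sigma_w^2) 2^{-2\sum_{l=1}^t R_l}$.

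The decisive step is to upper bound the relevance $\Delta_t \leq I(X;U^t)$ in terms of $\beta_t$. Since $X$ and $Y$ are jointly Gaussian with $Y = X + W$, I would decompose $X$ via its MMSE estimate from $Y$ as $X = aY + V$, with $a = \sigma_x^2/(\sigma_x^2+\sigma_w^2)$ and $V \sim \mathcal{N}(0, \sigma_x^2\sigma_w^2/(\sigma_x^2+\sigma_w^2))$ independent of $Y$, and hence independent of $U^t$ by the Markov chain. The conditional EPI applied to $X = aY + V$ then yields $2^{2h(X|U^t)} \geq a^2 \cdot 2^{2h(Y|U^t)} + 2\pi e\, \mathrm{Var}(V)$, from which a short algebraic simplification, together with the bound on $\beta_t$ above, produces the desired upper bound on $\Delta_t$ appearing in \eqref{the other SG region}.

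Achievability proceeds as in the binary case via the stronger Markov chain formulation of Remark \ref{Changing Relevance Formulation}: I would choose jointly Gaussian test channels of the form $U_t = U_{t+1} + \tilde{Z}_t$, with $U_{T+1} \triangleq Y$ and $\{\tilde{Z}_t\}$ independent zero-mean Gaussians whose variances are tuned to match each partial rate budget with equality; evaluating \eqref{eq:region2-rate}--\eqref{eq:region2-relevance} under this joint distribution recovers \eqref{the other SG region}. The ordering condition \eqref{condition_the other SG region} is then immediate, since $\Delta_t$ is monotone in $t$ by construction and $\Delta_T \leq I(X;Y) = \tfrac{1}{2}\log((\sigma_x^2+\sigma_w^2)/\sigma_w^2)$ by the data processing inequality. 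I expect the only real subtlety to be justifying the conditional EPI application, as $W$ is \emph{not} independent of $U^t$ in general, so one cannot apply EPI directly to the decomposition $Y = X + W$; the MMSE decomposition bypasses this difficulty by exposing a Gaussian innovation $V$ that is independent of $U^t$ by construction.
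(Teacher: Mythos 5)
Your proof is correct and follows essentially the same route as the paper: the decisive step in both is the MMSE decomposition $X = aY + V$ with the innovation $V$ independent of $Y$ (hence of $U^t$), followed by the conditional EPI; you merely parameterize by $h(Y\mid U^t)$ instead of the paper's $h(X\mid U^t)$, which lets you reach \eqref{the other SG region} directly and skip the paper's intermediate region \eqref{Scalar_Gaussian_equations} and its Step-2 equivalence argument. The only slip is the claim $\beta_t \ge \sigma_w^2$: since $U^t$ may be (nearly) a deterministic function of $Y$, $h(Y\mid U^t)$ can be arbitrarily small, so this lower bound is false in general --- but it is harmless, as your converse only uses $\beta_t \ge (\sigma_x^2+\sigma_w^2)2^{-2\sum_{l=1}^{t}R_l}$ from the rate constraint (the analogous bound $\gamma_t\ge\sigma_w^2$ in the paper is valid precisely because it is placed on $h(X\mid U^t)\ge h(X\mid Y)$, not on $h(Y\mid U^t)$).
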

\begin{proof} 
Similarly to the binary example, we prove Proposition \ref{prop:Gaussian} in two steps. First we prove the intermediate
relevance-complexity region $\mathcal{R}_{SG}$ given by \eqref{Scalar_Gaussian_equations}. Then, we prove the equivalence between $\mathcal{R}_{SG}$ and the region $\mathcal{R}'_{SG}$ in Proposition \ref{prop:Gaussian}.
\begin{subequations}\label{Scalar_Gaussian_equations}
\begin{align}
\sum_{l=1}^{t}R_l  &\geq  \frac{1}{2} \log  \left(\frac{\gamma_t+\sigma^2_x}{\gamma_t-\sigma^2_w} \right),\;\; \forall t \label{eq:Grate} \\
\Delta_t  &  \leq \frac{1}{2} \log \left(1+\frac{\sigma^2_x}{\gamma_t} \right),  \;\; \forall t \label{eq:Grelevance}
\end{align}
\end{subequations}
where $\{\gamma_t\}$ satisfies
\begin{align}\label{eq:Gorder}
\sigma^2_w \leq \gamma_T \leq \dots \leq \gamma_1 .
\end{align}

We first provide the converse proof.  In order to find the upper bound of \eqref{eq:563}, we lower bound the term $h(X|U^t)$. 
Following similar steps as the binary example, we have
\begin{align}\label{eq:615}
h(X|Y) \leq  h(X|U^t) \leq h(X), \;\; \forall t.
\end{align}
Since the optimal MMSE estimate of a Gaussian source $X$ given a noisy observation $Y$ is $\hat{X}=\frac{\sigma^2_x }{\sigma^2_x+\sigma^2_w} Y$, we can write $ X= \hat{X} + \tilde{X}$, where the estimate $\hat{X}$ is independent of the estimation error $\tilde{X}$ and also $\hat{X}$ is Gaussian with zero mean and variance $\sigma_e^2= (1/\sigma^2_x+1/\sigma^2_w )^{-1} $, yielding $h(X|Y) = \frac{1}{2}  \log\left( 2\pi e \sigma_e^2 \right)$. 
Hence, the inequalities \eqref{eq:615} imply that there exists a value $\gamma_t$ satisfying 
\begin{align}\label{eq:619}
h(X|U^t) = \frac{1}{2} \log\left( 2\pi e \left(\frac{1}{\sigma^2_x}+\frac{1}{\gamma_t} \right)^{-1} \right)
\end{align}
for $\gamma_t\in [\sigma^2_w, \infty)$. Combining \eqref{eq:619} and $h(X)=\frac{1}{2}\log(2\pi e \sigma_x^2)$, and using $h(X|U^T) \leq \dots \leq h(X|U_1)$, we readily obtain \eqref{eq:Grelevance} with \eqref{eq:Gorder}. 

Next, we wish to lower bound \eqref{eq:563} by finding the upper bound of $h(Y| U^t)$. 
Since $\hat{X}$ and $\tilde{X}$ are conditionally independent given any function of $Y$, 
 $\hat{X}$ and $\tilde{X}$ are also conditionally independent given $U_1, \dots, U_T$. Therefore, we can apply the conditional EPI \cite{el2011network} as
 \begin{align}\label{eq:644}
 h(X|U^t) & = h(\hat{X} + \tilde{X} |U^t) \nonumber\\
 &\geq \frac{1}{2} \log\left( 2^{2h(\hat{X}|U^t) } + 2^{2h(\tilde{X} |U^t)}\right) \nonumber \\
 &\stackrel{(a)}= \frac{1}{2} \log\left( 2^{2( \log a + h(Y|U^t)) } + 2^{2h(\tilde{X} |U^t)}\right)\nonumber \\
  &\stackrel{(b)}\geq \frac{1}{2} \log\left( 2^{2 ( \log a + h(Y|U^t) )} + 2^{2h(\tilde{X} |Y,U^t)}\right)\nonumber \\
  &\stackrel{(c)}\geq \frac{1}{2} \log\left( 2^{2(\log a + h(Y|U^t)) } + 2^{2h(\tilde{X} |Y)}\right)\nonumber \\
 &\stackrel{(d)}= \frac{1}{2} \log\left(\! 2^{2 (\log a \!+\! h(Y|U^t)) } + 2 \pi e \sigma_e^2 \! \right)
 \end{align} 
 where (a) follows because $\hat{X}=a  Y$ and $h(a X) = h(X) + \log |a|$ for $a=\frac{\sigma^2_x }{\sigma^2_x+\sigma^2_w}$; (b) follows because the conditioning decreases the entropy; (c) follows because $U_1, \dots, U_t$ are 
 functions of $Y$; (d) follows because $\tilde{X}$ and $Y$ are independent. 
 
Combining \eqref{eq:619} and \eqref{eq:644}, we obtain
\begin{align}\label{eq:661}
h(Y|U^t) \leq \frac{1}{2} \log\left[2\pi e \frac{(\gamma_t-\sigma^2_w)(\sigma^2_x+\sigma^2_w)}{\gamma_t+\sigma^2_x} \right]
\end{align}
Combining \eqref{eq:661} and $h(Y)=\frac{1}{2}\log 2\pi e(\sigma_x^2+\sigma_w^2)$, we obtain \eqref{eq:Grate}. This completes the converse proof.

For the achievability, we let $U_t=Y+Z_t$ such that $Z_{t}\sim\Nc (0, \gamma_t-\sigma^2_w)$. By evaluating $h(Y|U^t)$ and $h(X|U^t)$, we can easily show that 
this choice achieves the region \eqref{Scalar_Gaussian_equations}.

The proof of the equivalence between $\mathcal{R}_{SG}$ and the region $\mathcal{R}'_{SG}$ in Proposition \ref{prop:Gaussian} follows the same footsteps as Step 2 of Proposition \ref{prop:binary_proposition}, hence is omitted. By letting 
\begin{align}\label{gamma_t scalar Gaussian}
\gamma_t=\frac{2^{\left(-2\sum_{l=1}^t r_l \right)}\sigma^2_x+\sigma^2_w}{1-2^ {\left(-2\sum_{l=1}^t r_l \right)}}.
\end{align}
in \eqref{Scalar_Gaussian_equations}, we obtain the desired expression \eqref{the other SG region}. 


\end{proof}
\begin{figure*}[ht]
     \centering
     \subfloat[ ][Relevance-complexity tradeoff $(R,\Delta_1)$ with $\Delta_2=0.5$.]{
     \includegraphics[width=0.45\textwidth]{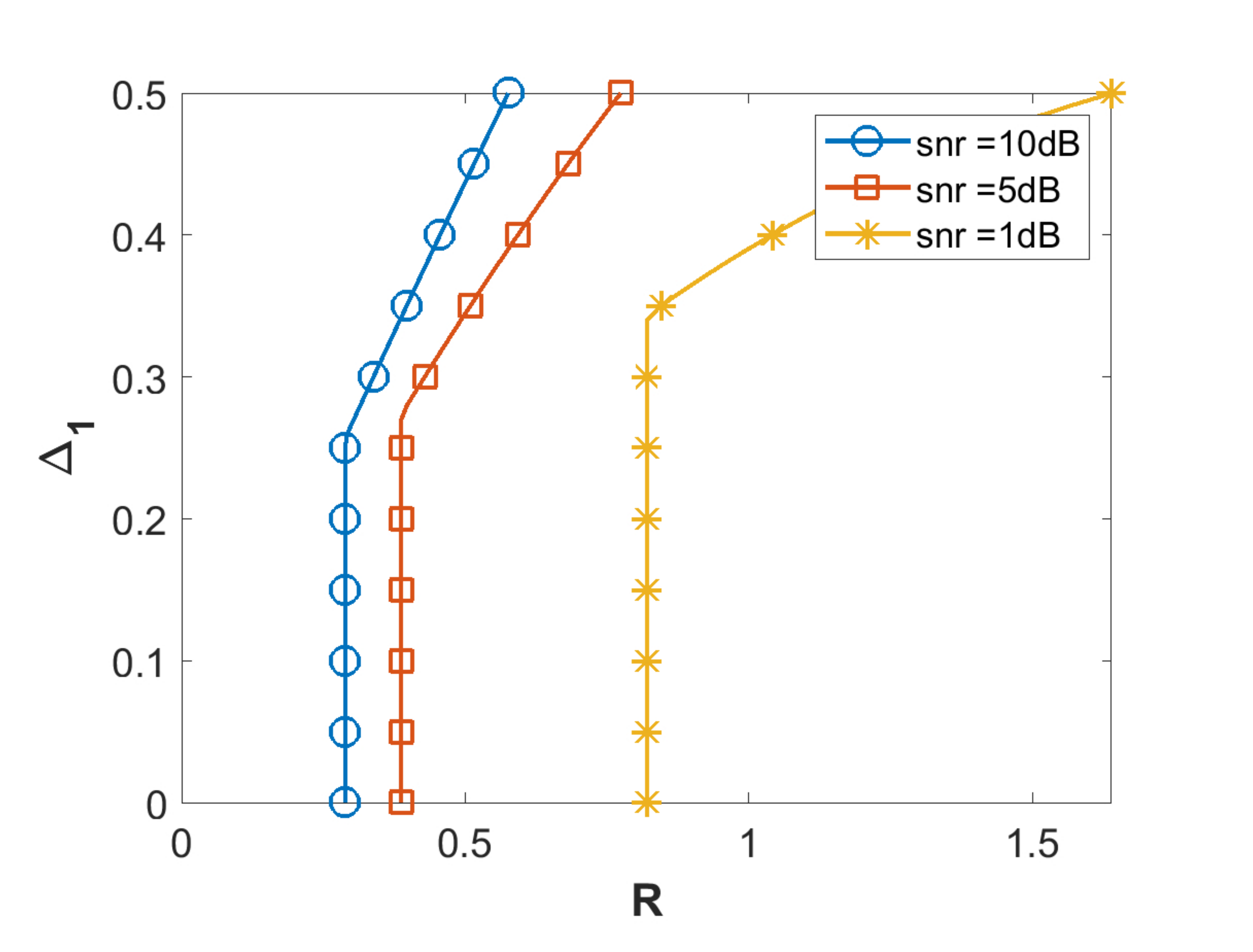}
     \label{fig:GR_07_b}
     }
     \subfloat[ ][Relevance-complexity tradeoff $(R,\Delta_2)$ with $\Delta_1=0$.]{
     \includegraphics[width=0.45\textwidth]{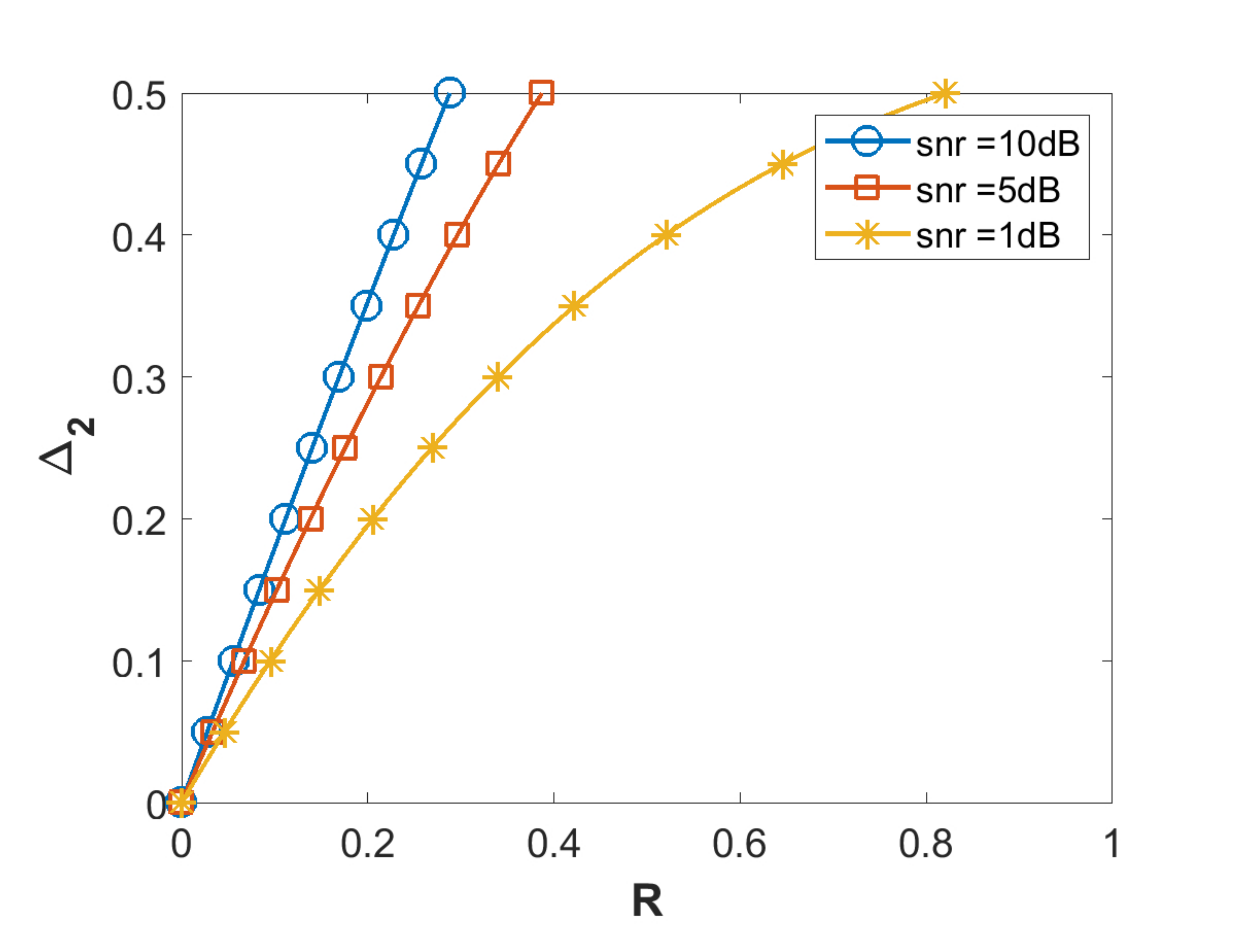}
     \label{fig:GR_08_b}
     }
     \caption{Comparing the regions $(R,\Delta_1)$ and $(R,\Delta_2)$ for different values of $\snr$.}
     \label{fig:GR_78_b}
\end{figure*}
\vspace{-1em}
\subsection{Numerical Evaluation}
We evaluate the relevance-complexity region of the binary and Gaussian examples in Propositions \ref{prop:binary_proposition} and \ref{prop:Gaussian}, for the case of two stages $T=2$ and focusing on the symmetric rate $R=R_1=R_2$. The tradeoff between the symmetric rate $R$ and the relevance $\Delta_1, \Delta_2$ for the binary case is 
\begin{align}\label{eq:SymR}
&R \geq \max\left\{ 1-h_2 \left( \frac{h_2^{-1}(1-\Delta_1)-p}{1-2p} \right),  \right. \nonumber\\
&\qquad\qquad\qquad \left. {} \frac{1}{2}-\frac{1}{2}h_2 \left( \frac{h_2^{-1}(1-\Delta_2)-p}{1-2p} \right) \right\}.
\end{align}
The same tradeoff for the Gaussian case is given by 
\begin{align}\label{eq:SymR_Gaussian}
&R \geq \text{max}\left\{ \frac{1}{2}\log \left(\frac{\snr}{(1+\snr) 2^{(-2\Delta_1)}-1}\right),\right.\nonumber\\
&\qquad \left. {} \frac{1}{4}\log \left(\frac{\snr}{(1+\snr) 2^{(-2\Delta_2)}-1}\right) \right\}.
\end{align}
where we let $\snr=\sigma_x^2/\sigma_n^2$. 
Fig. \ref{fig:BR_1112} shows the tradeoff for different values of crossover probability $p$. We observe that when the observation is less noisy with smaller value of $p$, the relevance $\Delta_t$ increases for any complexity $R$, for $t=1,2$. In Fig. \ref{fig:BR_11}, the relevance-rate tradeoff has a 
threshold point. For $p=0.2$ (red curve), by equalizing two terms inside the maximum in \eqref{eq:SymR}, we obtain the threshold point 
$(\Delta_1, R)=  (0.056,0.16)$. Namely, we can achieve any smaller value of $\Delta_1$ than $0.056$ for the same complexity $R=0.16$ while 
a larger relevance can be achieved only by increasing the complexity. 

Fig. \ref{fig:GR_78_b} shows the same tradeoff for different $\snr$ values. Similar to the binary case, in Fig. \ref{fig:GR_07_b} there are threshold points in the tradeoff. The threshold point for $\snr=5dB$ is $(0.27,0.39)$, which means we can achieve any smaller value of $\Delta_1$ than $0.27$ for the same complexity $R=0.39$ while a larger relevance can be achieved only by increasing the complexity.

\section{Modified Blahut-Arimoto Algorithm}\label{sect:Algorithm}
In this Section, we propose a modified Blahut-Arimoto (BA) (see e.g. \cite{blahut1972computation} algorithm which computes the region by iterating over a set of self-consistent equations when the source distribution is known or can be estimated with high accuracy. 
From the relevance-complexity region of \eqref{eq:region2}, we wish to minimize 
the sum complexity under the individual relevance constraints. 
\begin{align}
\label{optimization problem}
\text{min}_{p(u_1, \dots, u_T|y)}\;\; I(U_1,\dots, U_T;Y)-\sum_{l=1}^T\beta_l I(U_l;X).
\end{align}
We define the corresponding Lagrangian function as
\begin{align}
\label{eq_algorithm_02}
\mathcal{L}&= I(U^T;Y)-\sum_{l=1}^T\beta_l I(U_l;X)-\sum_{y}\lambda(y)\sum_{u^{T}}p(u^{T}|y) \nonumber\\
& = \sum_{y, u^{T}}p(y)p(u^{T}|y)\log \left(\frac{p(u^{T}|y)}{p(u^{T})} \right)-\sum_{y}\lambda(y)\sum_{u^{T}}p(u^{T}|y),\nonumber\\
&-\sum_{l=1}^{T}\beta_{l}\sum_{x, u_{l}}p(x)p(u_{l}|x)\log \left(\frac{p(u_{l}|x)}{p(u_{l})} \right)
\end{align}
where $\lambda (y)$ is the normalization term. By simple algebra, we readily obtain
\begin{align}
\label{eq_algorithm_03}
p(u^{T}|y)=\frac{p(u^{T})}{Z(y)}\prod_{l=1}^{T} \prod_{x}2^{\eta(l,x)-\gamma(l,x) },
\end{align}
where 
\begin{align*}
&\frac{1}{Z(y)}=2^{\lambda(y)}, \;\;\;  \gamma(l,x) = \beta_l p(u_l,x) / p(u^T), \\
&\eta(l,x) =\left[ \beta_l p(x|y) \left(1+\log \left(\frac{p(u_l|x)}{p(u_l)} \right) \right) \right]/\left[p(u_{\Tc_{\overline{l}}}|x,u_l) \right],
\end{align*}
and $\Tc_{\overline{l}}= [T]\setminus \{l\}$. 
Finally, our modified Blahut-Arimoto algorithm is summarized in Algorithm \ref{alg:BA_discrete}.
 \begin{algorithm}
 \caption{Blahut-Arimoto type Algorithm for known discrete sources}
 \label{alg:BA_discrete}
 \begin{algorithmic}[1]
 \renewcommand{\algorithmicrequire}{\textbf{Input:}}
 \renewcommand{\algorithmicensure}{\textbf{Output:}}
 \REQUIRE pmf $P_{X,Y}$, parameters $\beta_l, l \in [T]$.
 \ENSURE  $\{P_{U_t|Y}^*\}$ and $(\{R_t\}, \{\Delta_t\})$.
 \\ \textit{Initialisation} : Set $k=0$. Set $p^{(0)}(u^T|y)$ randomly.
  \REPEAT
   	\STATE Update the following pmfs for $t=1, \dots, T$
  	\STATE $p^{(k+1)}(u_t)=\sum_{u_{\Tc_{\overline{t}}}} \sum_{y}p^{(k)}(u^T|y)p(y)$,
  	\STATE $p^{(k+1)}(u_t|x)=\sum_{u_{\Tc_{\overline{t}}}}\sum_{y}p(y|x)p^{(k)}(u^T|y)$,
  	\STATE $p^{(k+1)}(u_{\Tc_{\overline{t}}}|u_t,x)=\frac{\sum_{y}p(y|x)p^{(k)}(u^T|y)}{p^{(k+1)}(u_t|x)}$,
  	\STATE Update $p^{(k+1)}(u^T|y)$ by using \eqref{eq_algorithm_03}.
  	\STATE $k \leftarrow k+1$
  \UNTIL {convergence.}
 \end{algorithmic} 
 \end{algorithm}

\section{Application to Pattern Classification}\label{sect:Classification}
We conclude the paper by providing an application of the proposed scalable IB in the context of pattern classification. 
Consider the problem of classification in the proposed stucture shown in Fig. \ref{fig:ScalableRSC}. 
In this problem, the role of the decoder is to guess the unknown class $X \in \mathcal{X}^C$, for the number of $C$ classes, on the basis of encoder outputs $U_1, \dots ,U_T$. Let $Q_{U^T|Y}$ be the encoder and $Q_{\hat{X}_t|U^t}$ be the soft-decoder output of stage $t \in [T]$. The pair of encoder and decoder induces a classifier at stage $t\in [T]$
\begin{align}\label{eq:Class_Error_01}
Q_{\hat{X}_t|Y} &=\sum_{u_1,\dots, u_t}  Q_{\hat{X}_t,U^t|Y}  \stackrel{(a)}=\sum_{u_1,\dots, u_t}  Q_{U^t|Y}Q_{\hat{X}_t|U^t}  \nonumber\\
&= \EE_{Q_{U^t|Y}} [Q_{\hat{X}_t|U^t} ],
\end{align}
where (a) follows because we have $Y \mkv U^t \mkv \hat{X}_t$. 
Following similar steps as \cite[Section IV. C]{ugur2020vector}, we can show that 
the probability of classification error at stage $t$ of the $T$-scalable classiffer is given by 
\begin{align}\label{eq:Class_Error_02}
&P_{\epsilon,t}^{(T)}(Q_{\hat{X}^t|Y})  = 1 - \EE_{P_{X,Y}}[Q_{\hat{X}^t|Y}]\nonumber\\
& \stackrel{(a)} \leq 1 - 2^{-\EE_{P_{X,Y}}[-\log(Q_{\hat{X}^t|Y})]}\nonumber\\
&\stackrel{(b)} = 1 - 2^{-\EE_{P_{X,Y}}\left[-\log \left(\EE_{Q_{U^t|Y}} \left[ Q_{\hat{X}_t|U^t}  \right] \right)\right]}\nonumber\\
&\stackrel{(c)} \leq 1 - 2^{-\EE_{P_{X,Y}}\EE_{Q_{U^T|Y}} \left[-\log \left( Q_{\hat{X}_t|U^t}  \right)\right]}
\end{align}
where (a) and (c) follows from Jensen's inequality; (b) follows from \eqref{eq:Class_Error_01}. By  minimizing the average logarithmic loss in \eqref{eq:Class_Error_02}, we obtain a tight lower bound on classification error probability. 
The minimum average logarithmic loss over all choices of decoders $Q_{\hat{X}_t|U^t}$ for any $t \in [T]$ and the encoder $Q_{U^T|Y}$ with rate no greater than $R_t$ bits per-sample in stage $t$ respectively, is $D_t^*=\inf\{D_t:(R_1, \dots, R_T,D_1, \dots, D_T) \in \mathcal{R} \}$, where $\mathcal{R}$ is the region stated in Theorem \ref{theorem:region1} and $t \in [T]$. Thus, we have
\begin{align}\label{eq:Class_Error_03}
&P_{\epsilon,t}^{(T)}(Q_{\hat{X}^t|Y})  \leq 1 - 2^{- D_t^*}= 1 - 2^{\Delta_t^*-H(X)},
\end{align}
where the last equality follows from $\Delta_t^* = H(X) - D_t^*$, $t \in [T]$. 

In Fig. \ref{fig:Class_Error_01}, we evaluate the upper bound of the binary classification error probability for the case of $T=3$ and symmetric rate $R_1=R_2=R_3=R$, by combining \eqref{the other binary region} and \eqref{eq:Class_Error_03}.
We can see that as the observation gets closer to the source (small $p$) the error probability smoothly decreases. 

\begin{figure}[ht]
  \centering
    \includegraphics[width=1\columnwidth]{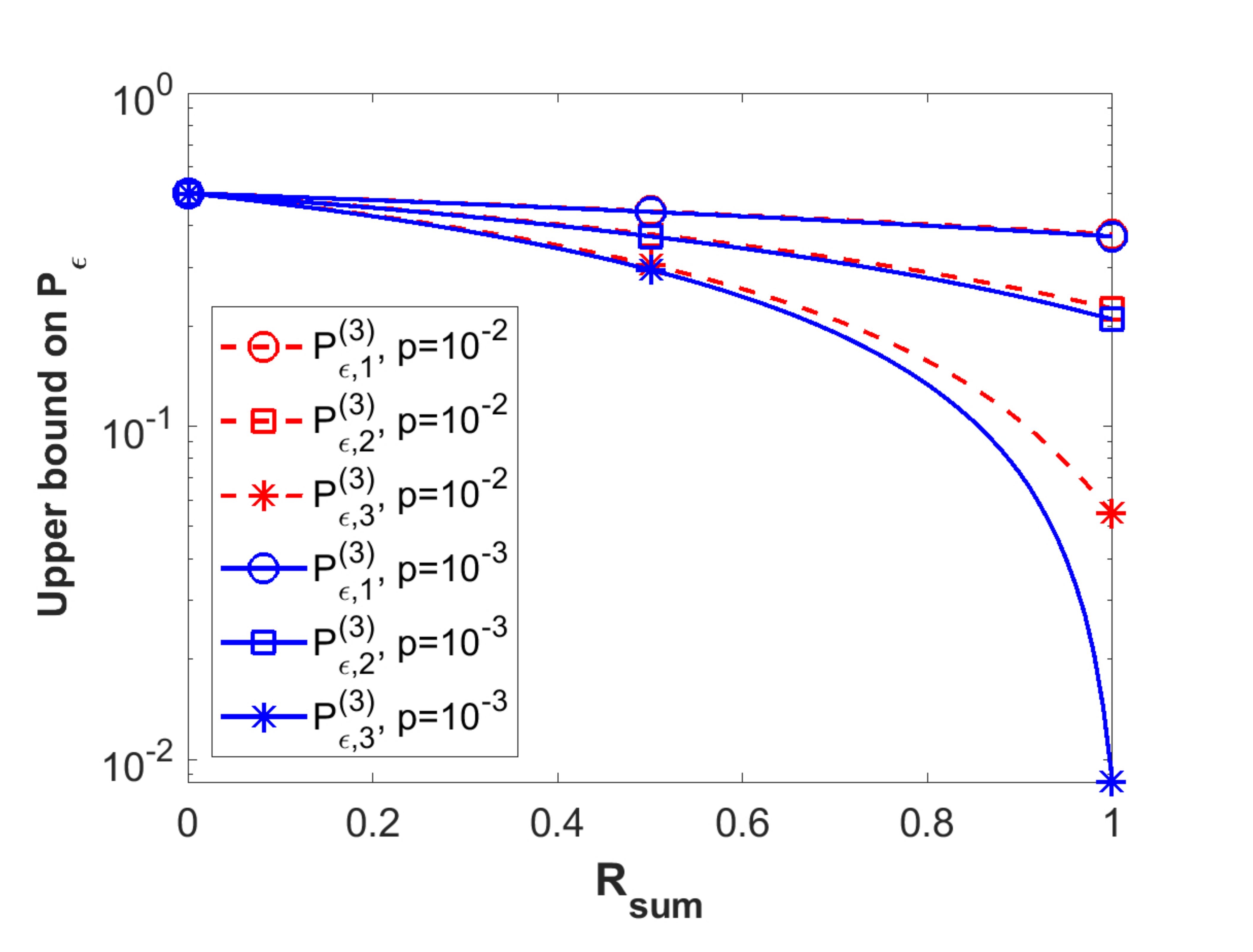}
  \caption{The upper bound of the binary classification error probability for the case of $T=3$ and symmetric rate $R_{sum}=3R$. Classification error is shown in logarithmic scale.}
  \label{fig:Class_Error_01}
\end{figure}

\bibliographystyle{IEEEbib}
\bibliography{References}

\end{document}